\newcommand{\system}{SwiftFusion\xspace}
\newcommand{\systemShort}{SFU\xspace}
\newcommand{\tasShort}{TAS\xspace}
\newcommand{\torus}{Torus Attention\xspace}
\newcommand{\torusShort}{Torus\xspace}
\newtheorem{theorem}{Theorem}[section]
\newtheorem{lemma}[theorem]{Lemma}
\newcommand{\DefineEval}[2]{%
  \expandafter\newcommand\csname eval-#1\endcsname{#2}%
}
\newcommand{\Eval}[1]{\csname eval-#1\endcsname}
\algrenewcommand\algorithmicrequire{\textbf{Parameters:}}
\algrenewcommand\algorithmicindent{0.8em}
\renewcommand{\ALG@name}{\small Algorithm}
\setlist{leftmargin=4mm}
\author{Jiacheng Yang}
\affiliation{%
  \institution{University of Toronto \&\\Vector Institute}
  \country{}
}
\author{Jun Wu}
\affiliation{%
  \institution{Amazon}
  \country{}
}
\author{Yaoyao Ding}
\affiliation{%
  \institution{University of Toronto \&\\
  Vector Institute \& NVIDIA}
  \country{}
}
\author{Zhiying Xu}
\affiliation{%
  \institution{Amazon}
  \country{}
}
\author{Yida Wang}
\affiliation{%
  \institution{Amazon}
  \country{}
}
\author{Gennady Pekhimenko}
\affiliation{%
  \institution{University of Toronto \&\\
  Vector Institute \& NVIDIA}
  \country{}
}
\keywords{Sequence Parallelism, Diffusion Transformers, GPUs}
\begin{document}

\title{\system: Scalable Sequence Parallelism for Distributed Inference of Diffusion Transformers on GPUs}

\begin{abstract}
Diffusion Transformers (DiTs) have gained increasing adoption in high-quality image and video generation. 
As demand for higher-resolution images and longer videos increases, single-GPU inference becomes inefficient due to increased latency and large activation sizes. 
Current frameworks employ sequence parallelism (SP) techniques such as Ulysses Attention and Ring Attention to scale inference.
However, these implementations have three primary limitations: (1) suboptimal communication patterns for network topologies on modern GPU machines, (2) latency bottlenecks from all-to-all operations in inter-machine communication, and (3) GPU sender-receiver synchronization and computation overheads from using two-sided communication libraries.
To address these issues, we present \system, a topology-aware efficient DiT serving engine. \system incorporates three key innovations: (1) a topology-aware sequence parallelism technique that accounts for inter- and intra-machine bandwidth differences, (2) Torus Attention, a novel SP technique enabling overlapping of inter-machine all-to-all operations with computation, and (3) a one-sided communication implementation that minimizes GPU sender-receiver synchronization and computation overheads. Our experiments demonstrate that \system outperforms the state-of-the-art approach by an average of \Eval{avg:method=sfu} (up to \Eval{max:method=sfu}).
\end{abstract}
\maketitle

\section{Introduction}

Diffusion transformers (DiTs) have been widely applied to generate photo-realistic images and videos, which drives the demand for computation and memory.
This demand has been amplified by the recent advances in generating high-resolution images \cite{flux-image-gen} or long-duration videos \cite{cogvideox,opensora2}, causing a long sequence length in the attention layer.
Thus, current DiT inference engines use sequence parallelism (SP) to shard the sequence dimension and then execute it on multiple GPUs.

However, existing SP techniques suffer from high communication overheads, since their design either does not fully exploit the network topology, especially the bandwidth difference between the intra- and inter-machine network, or causes bubbles in computation pipelines due to non-overlapping communication overheads.
Specifically, in Ring Attention \cite{ring-attn} the transfer of key and value tensors from/to its neighboring GPUs can be overlapped with the attention computation.
However, the communication volume per GPU required by Ring Attention does not decrease with increasing number of GPUs, thus making DiT inference communication-bound when scaling up to multiple GPUs.
Ulysses Attention \cite{ulysses} reduces the communication volume with more GPUs by employing all-to-all operators to first gather the full sequence dimension on each device.
However, the all-to-all operators are not overlapped by computation operators, and the parallelization degree of Ulysses Attention is limited by the number of heads of the attention layer.
Unified sequence parallelism (USP) \cite{usp} combines Ring and Ulysses Attention by applying Ring Attention for inter-machine communication and Ulysses Attention for intra-machine communication, thus achieving the best of both methods.
This design is intuitive as the intra-machine networks on modern GPU machines are often fully connected (e.g. NVSwitch \cite{nvswitch}) and thus enable simultaneous and high-bandwidth data exchange which is required by the all-to-all operators in Ulysses Attention.
However, they do not optimize the high inter-machine communication overhead introduced by Ring Attention, hindering scalability to multiple GPU machines.
These issues make the distributed DiT inference communication-bound, limiting its scalability across multiple GPU machines.

In this paper, we propose \system to reduce communication overheads when scaling up distributed DiT inference to multiple GPU machines, which is built upon three key ideas.
First, we introduce \emph{topology-aware communication scheduling}.
Unlike USP, we apply Ulysses Attention for inter-machine communication and Ring Attention for intra-machine communication.
In this design, we utilize the high-bandwidth intra-machine network for conducting the high-volume communication incurred by Ring Attention at the cost of using inter-machine networks for Ulysses Attention.
Second, to further reduce the high inter-machine communication costs introduced by the all-to-all operators in Ulysses Attention, we observe there are stationary elements before and after the all-to-all operators, which enables us to overlap the all-to-all operators with the attention computation on those stationary elements.
Based on this observation, we propose \emph{\torus}, a novel distributed attention algorithm that breaks down the attention computation and all-to-all operators into multiple chunks and overlaps the computation of the current chunk with communication of the next chunk.
Third, we leverage \emph{one-sided communication} libraries such as NVSHMEM \cite{nvshmem} and develop a specialized communication algorithm that unifies \torus, Ulysses, and Ring Attention to further reduce the synchronization and computation overheads caused by traditional two-sided communication libraries.

We evaluate \system on commonly-used image and video workloads and our results show that \system reduces inter-machine communication overheads and improves end-to-end inference latency across a range of distributed configurations (see Section \ref{sec:overall-performance-exp}) by on average \Eval{avg:method=sfu} speedup (up to \Eval{max:method=sfu}) over USP.

Our contributions can be summarized as follows:
\begin{itemize}[itemsep=0.5em,parsep=0em, topsep=0em]
    \item We present topology-aware communication scheduling that better exploits the network topologies on modern GPU machines.
    \item We design \torus, a novel distributed attention algorithm that hides inter-machine communication latency for attention computation.
    \item We introduce a one-sided communication implementation that unifies \torusShort, Ulysses, and Ring Attention to minimize synchronization and computation overheads.
    \item We evaluate \system on image and video generation workloads and achieve a state-of-the-art speedup in end-to-end inference latency compared to prior works.
\end{itemize}

\section{Background}
\label{sec:background}

\subsection{Diffusion Transformers}

As shown in Figure \ref{fig:dit}, to generate an image or video, diffusion transformers (DiTs) start from a random noise tensor through multiple diffusion sampling steps.
Each sampling step evaluates the DiT to gradually denoise the noise tensor. The output tensor of the last sampling step will be fed to a variational autoencoder to produce the final generated images or videos.

DiTs typically have a small model size but a much larger activation tensor size.
For example, CogVideoX \cite{cogvideox} takes only 12 GiB to store the model weights, but causes out-of-memory errors when generating a 10s 768$\times$1360 video on a single A100 GPU (40 GiB).
This situation could be even worse when generating high-resolution images or long-duration videos, because the activation tensor size grows linearly with the sequence length.
As a result, efficient distributed inference of DiTs relies heavily on sequence parallelism techniques to distribute the activations across multiple GPUs.

\begin{figure}[t]
    \centering
    \includegraphics[width=.75\linewidth]{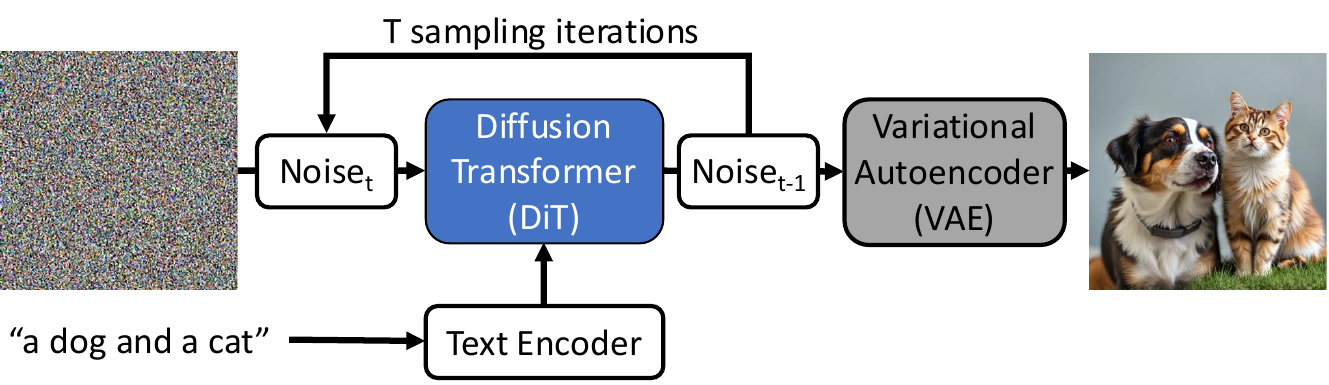}
    \caption{Generating images with diffusion transformers}
    \label{fig:dit}
\end{figure}

\subsection{Sequence Parallelisms}

\begin{figure*}
    \centering
    \includegraphics[width=\linewidth]{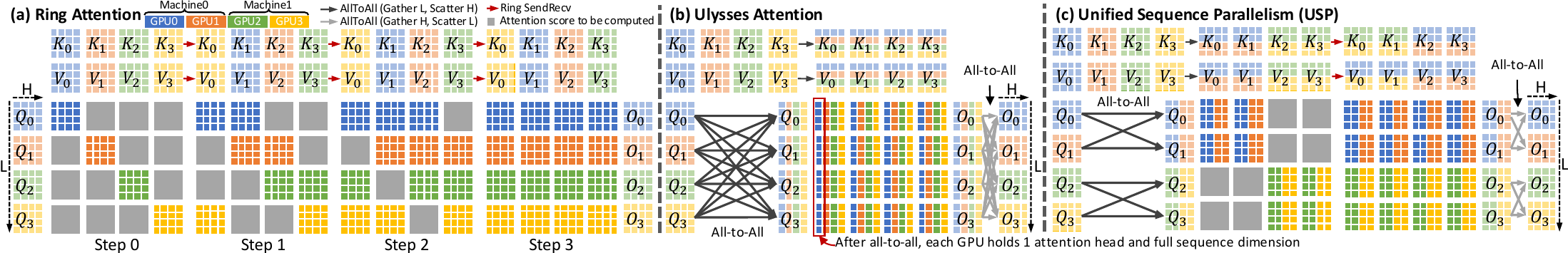}
    \caption{Illustration of commonly used SP techniques for DiT inference.}
    \label{fig:background}
\end{figure*}

Sequence parallel (SP) techniques divide and distribute the $Q$, $K$, and $V$ tensors of an attention layer to multiple GPUs.
Specifically, we represent the query, key, and value tensors of an attention layer as $Q$, $K$, and $V$ respectively, each of which has a shape $[B, L, H, D]$, where $B$ stands for batch size, $L$ for the sequence length, $H$ for the number of heads, and $D$ for the hidden dimension per attention head.
With $P$ GPUs, SP techniques shard the $Q$, $K$, $V$ tensors along the sequence dimension, resulting in tensors with shape $[B, L/P, H, D]$ on each GPU.
To avoid ambiguity, we assume the communication volume is calculated in terms of communication \emph{per GPU}.

As the $Q$ tensor on each GPU needs to compute the attention scores across all $KV$ tensors, SP techniques require necessary communication to compute the attention scores between each $Q$ tensor and all the $KV$ tensors.
Prior works achieve this by Ring Attention \cite{ring-attn}, Ulysses Attention \cite{ulysses}, or Unified Sequence Parallelism (USP) \cite{usp}, the combination of these two approaches.

\textbf{Ring Attention}~ Figure \ref{fig:background}a shows the execution of Ring Attention \cite{ring-attn}, which organizes all GPU communication in a ring-like manner.
Specifically, for $P$ GPUs, Ring attention consists of $P$ steps. 
In every step, GPU $i$ sends its own local $KV$ tensors to neighboring GPU $(i + 1)\% P$ and receives tensors from GPU $(i - 1) \% P$. 
At the same time, GPU $i$ computes attention scores between its local $Q$ tensor and the received $KV$ tensors.
To aggregate the results of the local $Q$ tensor on different received $KV$ tensors, Ring Attention maintains the running maximum and sum tensors of the attention outputs on each GPU similar to FlashAttention \cite{flash-attn,flash-attn-2,flash-attn-3}.
After $P$ steps, each GPU has seen all the $KV$ tensors from all the other GPUs, and thus can compute the final output tensor correctly.
In this way, Ring Attention needs to send or receive $\frac{2BLHD}{P}$ elements to transfer the $KV$ tensors to its neighbors for all but the last step, leading to $\frac{2(P - 1)BLHD}{P} \approx{2BLHD}$ elements in total.

\textbf{Ulysses Attention}~ Figure \ref{fig:background}b shows the execution of Ulysses Attention.
The key idea behind Ulysses Attention is that the attention computation is head-independent, i.e., the outputs of an attention layer of one head do not depend on the other heads of the $QKV$ tensors.
Thus, Ulysses Attention can first apply three all-to-all operators on the $Q$, $K$, and $V$ tensors, to gather the full sequence dimension and scatter the head dimension on each GPU, leading to $Q$, $K$, and $V$ tensors on each GPU with a shape $[B, L, H/P, D]$ respectively.
By doing so, the attention computation can be done locally on each GPU without further communication.
Note that after the attention computation, the output tensor $O$ on each GPU needs to be recovered to its original form, i.e., having a shape $[B, L/P, H, D]$.
Thus, Ulysses Attention executes another all-to-all operator by gathering the head dimension and scattering the number of head dimension.
Since there are four all-to-all operators, i.e. on the $Q$, $K$, $V$, and $O$ tensors, we can deduce that Ulysses Attention requires $\frac{4(P - 1)BLHD}{P^2} \approx \frac{4BLHD}{P}$ elements for communication.
When $P = 2$, Ulysses Attention requires $BLHD$ elements to communicate, which the same as that of Ring Attention.
Otherwise, Ulysses Attention requires less communication volume than Ring Attention.
However, Ulysses Attention is not applicable when the number of heads $H$ is smaller than the number of GPUs $P$, limiting its scalability to a large number of GPUs \cite{usp}.

\textbf{Unified Sequence Parallelism}~ To scale to a larger number of GPUs without incurring high communication overhead, Unified Sequence Parallelism (USP) \cite{usp} proposes to combine Ulysses and Ring Attention. 
Figure \ref{fig:background}c shows USP execution with two GPU machines, each of which is equipped with two GPUs.
USP applies Ulysses Attention for the intra-machine communication and Ring Attention for inter-machine communication. 
Specifically, USP first applies all-to-all operators within each GPU machine to gather the sequence dimension and scatter the head dimension.
However, unlike Ulysses Attention which scatters the whole head dimension and gathers the full sequence length, USP only scatters half head dimension within each GPU machine and only gathers half of the sequence dimension.
To make sure the $Q$ tensor on each GPU has seen the other half of the sequence dimension of the $KV$ tensors, USP applies Ring Attention to send and receive the $KV$ tensors on its neighbor \emph{GPU machine}.
Finally, USP applies another all-to-all operator to gather the head dimension and scatter the sequence dimension which restores the output tensors to its original distributed form.
In this way, the all-to-all operators only run within the GPU machine which is accelerated by high-bandwidth intra-machine networks such as NVSwitch \cite{nvswitch}, and the communication in the Ring Attention overlaps with the computation.

\subsection{One-Sided Communication Libraries}

The cross-GPU communication required by SP techniques typically relies on P2P communication functions in two-sided GPU communication libraries such as NCCL \cite{nccl}.
These libraries automatically detect and adapt communication patterns to the network topology and simplify the programmer's efforts to implement GPU communications.
However, using these libraries incurs implicit synchronization barriers as the receiver needs to ensure the sender's data is ready and vice versa.
Recently, one-sided communication libraries such as NVSHMEM \cite{nvshmem} have achieved decent speedups in the inference of large models \cite{deepep,flux-comm-overlap}.
Instead of using send-receive primitives, these libraries rely on push and pull primitives for transferring data from one GPU to another.
These libraries remove per-transfer synchronization between sender and receiver, placing synchronization responsibility on the programmer and exposing more flexibility for communication optimization.

\section{Key Challenges}
\label{sec:motivation}

We carefully examine existing SP techniques and identify three key challenges for high-performance DiT inference.

\begin{figure}[t]
    \centering
    \includegraphics[width=.9\linewidth]{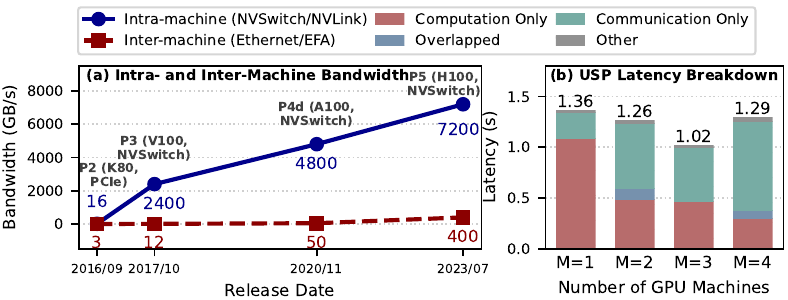}
    \caption{(a) Differences in aggregated bandwidth between intra- and inter-machine network; and (b) Latency breakdown of the state-of-the-art SP technique USP \cite{usp}.}
    \label{fig:bandwidth}
\end{figure}

\begin{figure}[t]
    \centering
    \includegraphics[width=.8\linewidth]{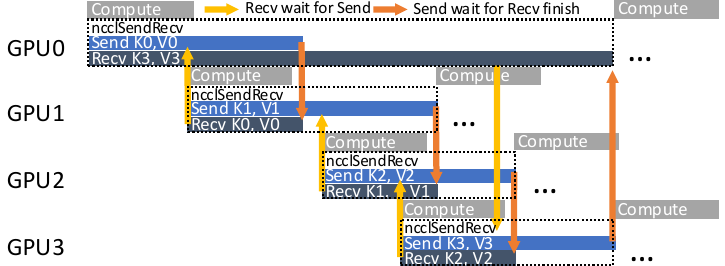}
    \caption{Implicit synchronization in each step of the Ring Attention due to two-sided communication APIs, i.e, \texttt{ncclSendRecv}.}
    \label{fig:ring-sync}
\end{figure}
\noindent\textbf{Challenge 1: Inter- and Intra-Machine Bandwidth Difference}~
Figure \ref{fig:bandwidth}a shows the difference between inter- and intra-machine communication bandwidth on modern GPU machines with respect to their release dates. 
Recently, high performance GPU-aware inter-machine communication networks have advanced considerably. Techniques such as GPUDirect Remote Direct Memory Access (RDMA) on Infiniband network and AWS Elastic Fabric Adapter (EFA) can have bandwidth ranging from 400 to 3200 Gbps. However, intra-machine communications are still much more performant than inter-machine communications in terms of both latency and throughput.
Existing SP implementations either do not optimize for network bandwidth differences or adopt a suboptimal combination of the two SP algorithms, leading to high communication overheads and low DiT inference system performance.
For example, as shown in Figure \ref{fig:bandwidth}b, the state-of-the-art SP technique, USP \cite{usp}, becomes communication-bound when scaling up to four GPU machines. 
This is because USP uses Ring Attention across GPU machines and Ulysses Attention within each GPU machine.
This design maps naturally to the network topology as Ring Attention only introduces ring-like communication patterns that are well-suited for inter-machine network and the more complicated Ulysses Attention is handled by the faster intra-machine GPU interconnections.
However, as discussed in Section \ref{sec:background}, the communication volume of Ring Attention does not decrease with more GPU machines, which produces high inter-machine communication overheads and limits its scalability to multiple GPU machines.
To this end, we propose to leverage Ulysses Attention for inter-machine communication and Ring Attention for intra-machine communication to better exploit the network topology on modern GPU machines.

\noindent\textbf{Challenge 2: Non-Overlapping All-to-All Operators in Ulysses Attention}~
As discussed in Section \ref{sec:background}, Ulysses Attention leverages all-to-all operators to gather the full sequence length on each GPU before performing attention computation.
While the latency of all-to-all operators in the Ulysses Attention reduces almost linearly with the number of GPUs, Ulysses Attention does not support overlapping its all-to-all operators with computation.
This is because existing implementations treat all-to-all operators as single atomic communication operations.
As a result, SP techniques that rely on Ulysses Attention can still be bottlenecked by the communication overheads, especially with a large number of GPU machines.
This issue becomes even more dominant when choosing Ulysses Attention for inter-machine communication.
To address this challenge, we propose chunked all-to-all operators where we pipeline the communication of each chunk with the computation of attention on the already received chunks.

\noindent\textbf{Challenge 3: Overheads from Two-Sided Communication Libraries}~
Existing SP implementations, such as Ring Attention and Ulysses Attention, rely on two-sided communication libraries such as NCCL \cite{nccl} for inter-GPU communication.
However, these libraries require strict synchronization between the sender and receiver GPUs to ensure the liveness of the send and receive data buffers, which leads to significant synchronization overhead.
For example, as shown in Figure \ref{fig:ring-sync}, in each step of ring attention, each GPU needs to wait for the data from the previous GPU before it can proceed with the attention computation and send data to the next GPU, which implicitly synchronizes all GPUs at each step.
Furthermore, two-sided communication libraries for intra-machine communication are often implemented with CUDA kernels.
These kernel-based implementations do not utilize driver-level communication APIs (e.g., \texttt{cudaMemcpyPeer}) and consume GPU Streaming Multiprocessors (SMs), causing contention with the model's compute kernels and increasing computation overheads.
As a result, the overall performance is degraded due to the synchronization and computation overhead, especially when the number of GPUs increases.
To address this challenge, we leverage one-sided communication libraries such as NVSHMEM \cite{nvshmem} and carefully design communication patterns to minimize synchronization and computation overhead while ensuring data consistency and correctness.

\section{\system}

\subsection{Overview}
\label{sec:key-ideas}

We propose \system which achieves efficient DiT inference via the following three key ideas.

\noindent\textbf{Topology-Aware Sequence Parallelism}~
To address the inter- and intra-machine bandwidth difference challenge, \system proposes to combine Ulysses and Ring Attention in a topology-aware manner.
Specifically, \system leverages Ulysses Attention for inter-machine communication and Ring Attention for intra-machine communication, since Ulysses Attention introduces lower communication volume than Ring Attention when scaling to multiple GPU machines and is more suitable for the slower inter-machine networks.
On the other hand, Ring Attention introduces higher communication volume than Ulysses Attention when scaling to multiple GPUs.
Thus, Ring Attention is better suited for the faster intra-machine GPU interconnections.
In this way, \system effectively reduces the overall communication overheads and improves system performance for distributed DiT inference.
However, since \system requires Ulysses Attention which has non-overlapping all-to-all operators and uses two-sided communication libraries that introduce high synchronization and computation overheads, \system further addresses these two challenges with the following ideas. 
% \yaoyao{after reading this part, I am interested in the performance with/without the following two ideas. But I did not find ablation study in evaluation. If there is no quantitative results, some qualitative analysis would also be good.}

\noindent\textbf{Communication-Overlapped \torus}~
To address the non-overlapping inter-machine communication overheads challenge in Ulysses Attention, we propose \torus that divides the all-to-all operators into multiple chunks and pipelines the communication of each chunk with the computation of attention on the already received chunks.
Specifically, we break the $QKV$ tensors into multiple chunks.
When performing the all-to-all operators in Ulysses Attention, we observe that the chunks for the current rank remain unchanged after the all-to-all operators.
Thus, \system can directly first start the attention computation on those chunks without waiting for the all-to-all operators to finish and at the same time starts sending and receiving the other chunks to and from other GPUs.
While receiving each chunk, \system immediately starts the attention computation on the already received chunks, effectively overlapping the communication and computation.
By doing so, \system effectively hides the inter-machine communication latency and reduces the overall communication overheads.

\noindent\textbf{One-Sided Communication Pattern Design}~
To effectively leverage one-sided communication libraries, \system tailors the communication patterns of Ring, Ulysses, and \torus to fully utilize the benefits of one-sided communication while minimizing synchronization overheads.
Specifically, \system carefully co-designs the communication algorithms and the direction of each communication operator.
As a result, \system requires only intra-machine synchronizations and two inter-machine synchronizations (at the beginning and end of a layer), thus efficiently utilizing one-sided libraries and minimizing synchronization and computation overheads.

\subsection{Topology-Aware Communication Scheduling}

\begin{figure}[t]
    \centering
    \includegraphics[width=.8\linewidth]{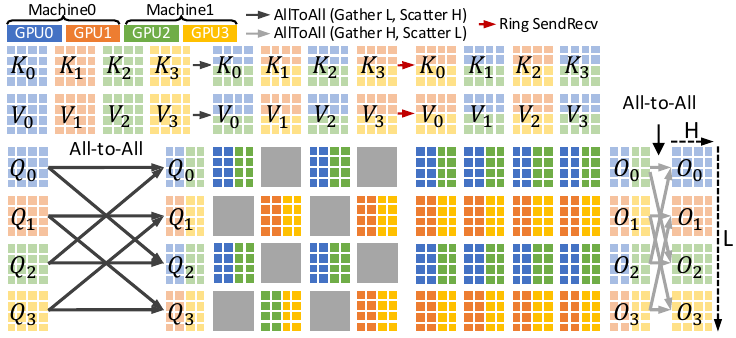}
    \caption{Topology-aware sequence parallelism in \system.}
    \label{fig:tas}
\end{figure}
\system employs a topology-aware communication scheduling strategy that assigns SP techniques based on the network topology of the GPU cluster.
Unlike USP \cite{usp}, \system applies Ulysses Attention for inter-machine communication and Ring Attention for intra-machine communication.
Specifically, in Figure \ref{fig:tas}, suppose there are $N$ GPU machines, each of which is equipped with $M$ GPUs, and \system organizes the $N \times M$ GPUs into a 2D device mesh of shape $P_u \times P_r$, creating a process group of size $P_u$ for Ulysses Attention and a process group of size $P_r$ for Ring Attention.
As discussed in Section \ref{sec:background}, $P_u$ is limited by the number of heads $H$ as Ulysses Attention requires $H$ to be divisible by $P_u$.
In the simplest case $H = N$, we can set $P_u = N$ and $P_r = M$, where \system applies Ulysses Attention across the $N$ GPU machines and Ring Attention within each GPU machine.
In this way, \system minimizes the inter-machine communication volume and effectively utilizes the high-bandwidth intra-machine GPU interconnections.
However, when $H \neq N$, we set $P_u = \gcd(NM, H)$\footnote{$\gcd$ stands for the greatest common divisor.} as this maximizes the usage of Ulysses Attention to reduce inter-machine communication volume, and then set $P_r = \frac{NM}{P_u}$.
By doing so, \system can always leverage Ulysses Attention to reduce inter-machine communication volume as much as possible:
(1) When $P_u \geq N$, \system fully utilizes Ulysses Attention for inter-machine communication, and the communication volume decreases almost linearly with the number of machines $N$;
(2) When $P_u < N$, \system uses a combination of Ulysses and Ring Attention for inter-machine communication volume, and the Ulysses Attention can still be used to reduce inter-machine communication volume.
In contrast, even with the same $P_u$ and $P_r$, USP uses Ring Attention for inter-machine communication, which implies that Ulysses Attention reduces the inter-machine communication volume only if $P_r < N$.
The only case where \system could incur more inter-machine communication than USP is when $P_u = 2$, where Ulysses has more communication volume than Ring Attention as described in Section \ref{sec:background}.
Thus, USP incurs higher inter-machine communication volume than \system in most cases and additionally we formally prove it in Appendix \ref{app:complexity-analysis}.

\begin{figure*}[t]
    \centering
    \includegraphics[width=\linewidth]{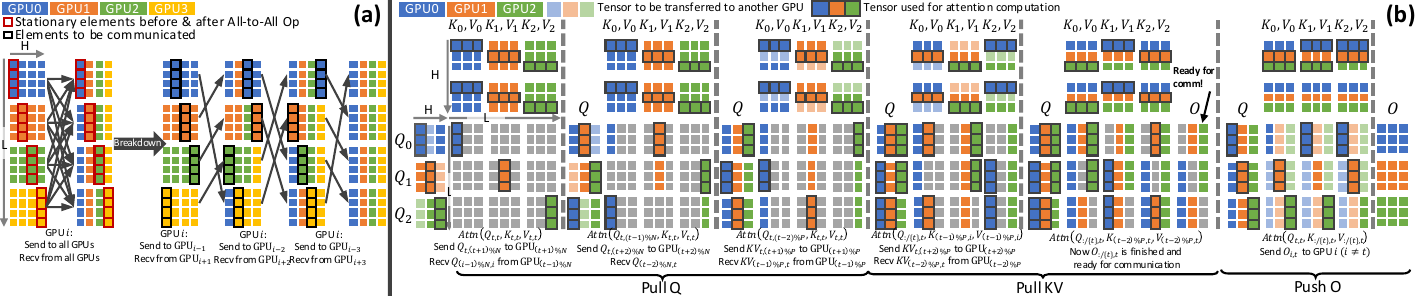}
    \caption{Illustration of (a) breakdown of an all-to-all operator and (b) \torus communication scheduling with $3$ GPU machines.}
    \label{fig:torus-scheduling}
\end{figure*}

During the attention execution, since each $QKV$ tensor is sharded along the sequence dimension, \system first applies all-to-all operators to gather the sequence dimension and scatter the head dimension of $QKV$ tensors within each GPU machine with the Ulysses Attention process group.
Then, \system applies Ring Attention to communicate the $KV$ tensors within the GPUs on each GPU machine.
Finally, \system applies another all-to-all operator within each GPU machine to gather the head dimension and scatter the sequence dimension of the output tensor $O$ to restore the output tensor to its original distributed form.
While the exact speedup of \system depends on the deployment environment and the available intra- and inter-machine bandwidths, we expect \tasShort can significantly reduce inter-machine communication overhead and thereby improve the performance of distributed DiT inference, especially when the discrepancy between intra- and inter-machine bandwidth is huge.

However, since Ulysses Attention relies on all-to-all operators that are not overlapped with computation, using Ulysses Attention for inter-machine communication incurs non-overlapping inter-machine communication overheads (see Section \ref{sec:motivation}).
We address this challenge with the following \torus.

\subsection{\torus}
To reduce the non-overlapping inter-machine communication overheads, we propose \torus that hides the latency of all-to-all operators by overlapping their communication with attention computation.

\textbf{Breakdown of All-to-All Operators}~
We find that some elements of the input and output tensors are stationary before and after the all-to-all operators, so they can be used for attention computation without waiting for the all-to-all to finish. At the same time, we start the remaining all-to-all transfers in the background to overlap the communication with computation.

Specifically, Figure \ref{fig:torus-scheduling}a (left) illustrates the distributed layout of a tensor before and after an all-to-all operator among four GPUs with sequence length $L = 16$ and head dimension $H = 4$.
We found that elements with red boxes, namely elements whose index of the number of heads dimension is the current rank, will remain stationary before and after the all-to-all operator in the Ulysses Attention.
Thus, these elements can be directly used for some computation without waiting for the all-to-all operators to finish.
At the same time, we can start the all-to-all operators for the remaining elements in the background.
Thus, we can hide the latency of the all-to-all operator by overlapping with computation.
Based on this observation, we propose \torus that breaks down a single all-to-all operator into multiple stages.
Specifically, as shown in Figure \ref{fig:torus-scheduling}a (right), the all-to-all operators can be broken into 2 stages for 3 GPU machines.
In each stage $k \in \{1, 2\}$, GPU $i$ first uses elements whose head index equals $i$ for computation, and concurrently sends elements with head index $(i - k) \% N$ to GPU $(i - k) \% N$.
In this way, after the communication of each stage, we initiate computation on the newly received elements while sending the elements required by the computation of the next stage in the background, thus effectively overlapping the all-to-all operators with computation.

\textbf{Communication Scheduling for Attention Computation}~
While breaking all-to-all operators hides the latency of \emph{one} all-to-all operator with computation, hiding latency of \emph{four} all-to-all operators in Ulysses Attention remains challenging. 
Besides, the first three all-to-all operators transform the $Q$, $K$, and $V$ tensors respectively, while the fourth all-to-all operator transforms the output tensor $O$ (see Section \ref{sec:background}).
These tensors have different lifespans during attention computation.
Thus, we need to carefully schedule the communication and computation order of each tensor chunks to overlap communication with computation as much as possible.

To address this challenge, we design a tailored communication scheduling for overlapping the all-to-all operators with attention computation.
The key idea is to first schedule the communication of the $Q$ tensors first followed by the $KV$ tensors, since the communication of $KV$ tensors doubles the communication volume and makes it more difficult to be fully overlapped with computation.
Specifically, as shown in Figure \ref{fig:torus-scheduling}b, during the attention execution, we break the Ulysses Attention execution into the \emph{Pull Q}, \emph{Pull KV}, and \emph{Push O} stages.
The name of ``\emph{Pull}'' and ``\emph{Push}'' indicate the communication primitives we use in the one-sided communication implementation as described in Section \ref{sec:1-side-comm}.
However, this algorithm can also be implemented with two-sided communication libraries such as NCCL \cite{nccl}.
For clarity, we denote $T_{l, h}$ as the partition of the sequence dimension with index $l$ and the partition of the number of head dimension with index $h$ of the tensor $T$ respectively.
We use ``$_:$'' to denote all indices and ``$_{:\setminus\{l\}}$'' to denote the set of all indices except $l$. Thus $T_{:\setminus\{l\}, h}$ denotes the slice of $T$ that contains all sequence partitions except the $l$-th, while selecting the $h$-th partition of the head dimension.
Unless otherwise noted, we use GPU $t$ to denote the GPU with local rank $t$ in the \torus process group.

In the \emph{Pull Q} stages, we start transferring the chunks of the $Q$ tensor by breaking down the all-to-all operators described above, whose latency is hidden by overlapping with the attention computation.
Specifically, we have exactly $N$ Pull Q stages.
In the $k$-th Pull Q stage ($k \in \{1, \dots N\}$), GPU $t$ first computes the attention scores using the $Q_{(t - k + 1)\% N,t}$, $K_{t, t}$, and $V_{t, t}$.
At the same time, the GPU $t$ will send $Q_{t, (t + k) \% N}$ to the GPU $(t + k) \% N$, where the communication of the $Q$ tensor is overlapped with computation.
Note that in the last Pull Q stage, GPU $t$ computes the attention using the received $Q_{(t - N + 1)\%N,t}$ while sending $K_{t, (t + 1) \% N}$ and $V_{t, (t + 1) \% N}$ to GPU $(t + 1) \% N$ for preparing the computation in the first Pull KV stage.

In the \emph{Pull KV} stages, similar to the Pull Q stages, we break down the all-to-all operators of the $KV$ tensors into multiple stages and hide the latency of each stage communication with computation.
Specifically, we have $N - 1$ Pull KV stages.
In the $k$-th Pull KV stage ($k \in \{1, \dots N - 2\}$), GPU $t$ first computes the attention using the received $Q_{:\setminus\{t\},t}$, $K_{(t - k)\% N,t}$, and $V_{(t - k)\% N,t}$, while sending $K_{t, (t + k) \% N}$ and $V_{t, (t + k) \% N}$ to GPU $(t + k) \% N$.
In the last Pull KV stage, GPU $t$ only computes the attention on $Q_{:\setminus\{t\},t}$, $K_{(t - N + 1)\% N,t}$, and $V_{(t - N + 1)\% N,t}$ without overlapping with any communication.
Note that we need to combine the newly computed attention outputs with previously computed attention outputs using different $KV$ tensors and thus we need to merge the attention outputs similar to the Ring Attention.
During the computation, the GPU $t$ will send $K_{t, (t + k) \% N}$ and $V_{t, (t + k) \% N}$ such that the communication of the $KV$ tensors can be overlapped with computation.
After the Pull KV stages, the attention output $O_{:\setminus\{t\},t}$ on GPU $t$ is fully computed and ready to be sent to the other GPUs.

In the \emph{Push O} stages, we overlap the computation of the remaining output tensor, i.e., $O_{t, t}$.
Specifically, while computing $O_{t, t}$ locally on GPU $t$ using $Q_{t, t}$, $K_{:\setminus\{t\}, t}$, and $V_{:\setminus\{t\}, t}$, we start sending $O_{i, t}$ to GPU $i$ for all $i \neq t$.
Note that $O_{t, t}$ should stay on GPU $t$ and does not need to be transferred to the other GPUs.
Thus, we can overlap the computation of $O_{t, t}$ with the communication of $O_{:\setminus\{t\},t}$, effectively hiding the communication overheads of the last all-to-all operator.

In \system, we use \torus only for inter-machine communication.
This is because intra-machine communication overheads are already negligible due to high-bandwidth intra-machine interconnects such as NVSwitch.
For simplicity, we assume $N \mid P_u$, i.e., the number of GPU machines divides the parallelization degree of Ulysses Attention, such that we can assume the parallelization degree of \torus is $N$.
However, \system can be easily extended when $N \nmid P_u$ by only applying \torus on a subset of the GPU machines.
To avoid confusion, we denote $P'_u = P_u / N$ as the parallelization degree of Ulysses Attention for intra-machine communication and we assume $P'_u \cdot P_r = M$ where $M$ is the number of GPUs on each GPU machine.
In this way, we overlap the inter-machine communication while still leveraging the benefits of Ulysses and Ring Attention, thus significantly reducing the inter-machine communication overheads. 

\subsection{One-Sided Communication Implementation}
\label{sec:1-side-comm}

As shown in Algorithm \ref{alg:1-side-comm}, we propose a unified implementation for \torusShort, Ulysses, and Ring Attention using one-sided communication libraries such as NVSHMEM \cite{nvshmem}.
For clarity, we denote a GPU $x$ by $x = (t, u, r)$, where $t$, $u$, and $r$ are the local ranks in the Torus, Ulysses, and Ring process groups, respectively.
We use $\textsc{ScatterPush}(A, B, x)$ ($\textsc{GatherPull}(A, B, x)$) to denote element-wise push (pull) operations between tensors in list $A$ and the corresponding GPUs in list $x$.
Namely, for the $i$-th operation, we push $A[i]$ to GPU $B$ on $x[i]$ (pull $B$ from GPU $x[i]$ to $A[i]$).
The notation $(t, :, r)$ means all GPUs with local rank $t$ in the Torus process group and local rank $r$ in the Ring process group, where ``$:$'' indicates all ranks in the Ulysses group.

\begin{algorithm}[t]\small
    \caption{\small Pseudocode of \system with one-sided communication}
    \label{alg:1-side-comm}
    \begin{algorithmic}[1]
        \Require The input $QKV$ tensors of shape $[B, L, H, D]$ on each GPU; \torusShort, Ulysses, and Ring Attention degree and rank denoted by $(T, t)$, $(U, u)$, $(R, r)$
        \Procedure{RingAttn}{$Q, K, V, O, l, m$}\label{line:ringAttnBegin}
            \State Create empty tensors $K^\text{buf}, V^\text{buf}$, and tensors $K^\text{cur}, V^\text{cur}$ cloned from $K, V$
            \For{$i \gets 1, \dots, R$}
                
                \State $E_\text{i} \gets \textsc{Pull}(\{K, V\}, \{K^\text{buf}, V^\text{buf}\}, (t, u, (r + i) \% R))$~\textbf{if $i < R$}
                \State $\textsc{FlashAttention}(Q, K^\text{cur}, V^\text{cur}, O, l, m)$\Comment{In-place update $O, l, m$}
                \State $\textsc{Wait}(E_\text{i})$~\textbf{if $i < R$}
                \State $K^\text{cur}, V^\text{cur} \gets K^\text{buf}, V^\text{buf}$
            \EndFor
        \EndProcedure{}\label{line:ringAttnEnd}
        \Function{\system}{$Q, K, V, T, t, U, u, R, r$}
            \State Rearrange $Q$, $K$, $V$ from $[B, L, TU\cdot \frac{H}{TU}, D]$ into $\left[T, U, B, \frac{H}{TU}, L, D\right]$
            \State $Q^\text{buf}, K^\text{buf}, V^\text{buf} \gets \textsc{Clone}(Q, K, V)$
            \State $O^\text{buf} \gets \textsc{ZeroLike}(Q)$
            \State $O \gets \textsc{ZeroLike}(Q)$
            \State $l \gets \textsc{Zeros}(\left[T, U, B, \frac{H}{TU}, L\right])$\Comment{Running sum tensor}
            \State $m \gets \textsc{Full}(-\infty, \left[T, U, B, \frac{H}{TU}, L\right])$\Comment{Running max tensor}
            \State $\textsc{ScatterPush}(\{Q_{t, :}, K_{t, :}, V_{t, :}\}, \{Q^\text{buf}_{t, u}, K^\text{buf}_{t, u}, V^\text{buf}_{t, u}\}, (t, :, r))$\label{line:ulyssesBegin}\label{line:commOuter1}
            \State {$\textsc{BarrierAll}()$}\label{line:barrierBegin}
            \State $Q_{t, :}, K_{t, :}, V_{t, :} \gets Q^\text{buf}_{t, :}, K^\text{buf}_{t, :}, V^\text{buf}_{t, :}$
            \For{$k \gets 1, \dots, N - 1$}\label{line:issueCommBegin}
                \State $t' \gets (t-k)\% N$
                \State $E^Q_k \gets \textsc{GatherPull}(\{Q_{t', :}\}, \{Q^\text{buf}_{t', u}\}, (t', :, r))$\label{line:commOuter2}
                \State $E^{KV}_k \gets \textsc{GatherPull}(\{K_{t', :}, V_{t', :}\}, \{K^\text{buf}_{t', u}, V^\text{buf}_{t', u}\}, (t', :, r))$\label{line:commOuter3}
            \EndFor\label{line:issueCommEnd}
            \State $\textsc{RingAttn}(Q_{t, :}, K_{t, :}, V_{t, :}, O_{t, :}, l_{t, :}, m_{t, :})$\Comment{First \emph{Pull Q} stage}\label{line:firstPullQ}
            \For{$k \gets 1, \dots, N - 1$}\label{line:pullQBegin}
            \Comment{Remaining $N - 1$ \emph{Pull Q} stages}
                \State $t' \gets (t-k)\% N$
                \State $\textsc{Wait}(E^Q_k)$
                \State $\textsc{RingAttn}(Q_{t', :}, K_{t, :}, V_{t, :}, O_{t', :}, l_{t', :}, m_{t', :})$
            \EndFor\label{line:pullQEnd}
            \For{$k \gets 1, \dots, N - 1$}\Comment{\emph{Pull KV}}\label{line:pullKVBegin}
                \State $t' \gets (t-k)\% N$
                \State $\textsc{Wait}(E^{KV}_k)$~\textbf{and}~$\textsc{Barrier}(R)$\label{line:pullKVBarrier}\Comment{Barrier the Ring Attention group}
                \State $\textsc{RingAttn}(Q_{:\setminus\{t\}, :}, K_{t', :}, V_{t', :}, O_{:\setminus\{t\}, :}, l_{:\setminus\{t\}, :}, m_{:\setminus\{t\}, :})$
            \EndFor\label{line:pullKVEnd}
            \For{$k \gets 1, \dots, N - 1$}\Comment{\emph{Push O}}\label{line:pushOCommInterBegin}
                \State $t' \gets (t-k)\% N$
                \State $\textsc{ScatterPush}(\{O_{t', :}\}, \{O^\text{buf}_{t, u}\}, (t', :, r))$\label{line:commOuter4}
            \EndFor\label{line:pushOCommInterEnd}
            \State $\textsc{RingAttn}(Q_{t, :}, K_{:\setminus\{t\}, :}, V_{:\setminus\{t\}, :}, O_{t, :}, l_{t, :}, m_{t, :})$\label{line:pushOCompute}
            \State $\textsc{ScatterPush}(\{O_{t, :}\}, \{O^\text{buf}_{t, u}\}, (t, :, r))$\Comment{For Ulysses Attention}\label{line:pushOCommIntra}\label{line:commOuter5}
            \State {$\textsc{BarrierAll}()$}\label{line:barrierEnd}
            \State \Return $O$
        \EndFunction{}
    \end{algorithmic}
\end{algorithm}
\system first applies a \textsc{ScatterPush} operation to perform the all-to-all operators on the local tensors $Q_t$, $K_t$, and $V_t$ for Ulysses Attention (Line \ref{line:ulyssesBegin}).
Then, \system synchronizes all GPUs with the ``\emph{BarrierAll}'' primitive to ensure previous \textsc{ScatterPush} operations are completed (Line \ref{line:barrierBegin}).
Next, \system issues all \textsc{GatherPull} operations needed for both the \emph{Pull Q} and \emph{Pull KV} stages (Line \ref{line:issueCommBegin}-\ref{line:issueCommEnd}).
By issuing all pulls without waiting for each to complete, we maximize overlap between communication and computation.

Next, \system starts the attention computation.
\system first performs the Ring Attention for the first \emph{Pull Q} stage using the local $Q_{t, t}$, $K_{t, t}$, and $V_{t, t}$ tensors (Line \ref{line:firstPullQ}).
Then, in both \emph{Pull Q} and \emph{Pull KV} stages, \system waits for the communication of each chunk of the $Q$ and the $KV$ tensors to finish before performing the attention computation using the newly received chunk (Line \ref{line:pullQBegin}-\ref{line:pullQEnd} and Line \ref{line:pullKVBegin}-\ref{line:pullKVEnd}).
For the attention computation in both stages, \system invokes the \textsc{RingAttn} function (Line \ref{line:ringAttnBegin}-\ref{line:ringAttnEnd}) that implements Ring Attention using one-sided communication.
Instead of using a ring-like communication, we directly pull the $KV$ tensors from the other GPUs given the rank with the $\textsc{Pull}$ primitive, thus removing the unnecessary synchronization needed in the original Ring Attention.
Note that in each Pull KV stage, after waiting for the communication of the $KV$ tensors to finish, we need to perform a ``\emph{barrier}'' operation to synchronize all GPUs that are in the Ring Attention process group (Line \ref{line:pullKVBarrier}), since Ring Attention requires the $KV$ tensors on all participating GPUs to be ready before starting the attention computation.

Finally, the output tensor $O_{:\setminus\{t\}, :}$ is ready to be sent from GPU $t$ to the other GPUs.
Since we need to account for the Ulysses Attention's all-to-all operator on the output tensor $O$, \system issues a \textsc{ScatterPush} operator to send $O_{:\setminus\{t\}, :}$ to the other GPUs in the \emph{Push O} stage (Line \ref{line:pushOCommInterBegin}-\ref{line:pushOCommInterEnd}).
At the same time, \system computes the remaining output tensor $O_{t, :}$ locally on GPU $t$ (Line \ref{line:pushOCompute}).
After finishing the computation of $O_{t, :}$, \system performs another \textsc{ScatterPush} operator to send $O_{t, :}$ to the other GPUs for Ulysses Attention (Line \ref{line:pushOCommIntra}).
Finally, \system performs another ``\emph{BarrierAll}'' operation to ensure all communication operators are finished before returning the output tensor (Line \ref{line:barrierEnd}).

In this way, \system leverages one-sided communication libraries to implement \torusShort, Ulysses, and Ring Attention while still overlapping the communication with computation and reducing the synchronization overheads.

\section{Evaluation}
\label{sec:evaluation}

\begin{figure*}[t]
    \centering
    \includegraphics[width=\linewidth]{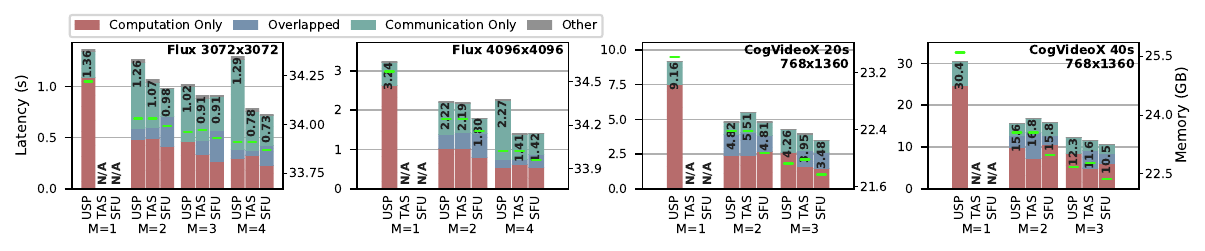}
    \caption{Overall end-to-end performance of \system compared to baselines with their own optimal distributed configurations, where $M$ stands for the number of GPU machines.}
    \label{fig:overall-performance}
\end{figure*}

\subsection{Experiment Setups}

\textbf{Baselines}~ We compare \system, denoted by \systemShort, with the state-of-the-art work USP \cite{usp}.
We also evaluate a variant of \system that only applies the topology-aware sequence parallelism, denoted by \tasShort, to demonstrate the benefits of \torus and one-sided communication implementation.
Note that when using a single GPU machine, all methods degrade to Ulysses Attention since there is no inter-machine communication.

\noindent\textbf{Hardware \& Software}~ We conduct all experiments on four AWS \texttt{p4de.24xlarge} instances, each equipped with 8 NVIDIA A100 GPUs (40 GiB per GPU) interconnected via NVSwitch. We also enable the AWS Elastic Fabric Adapter (EFA), offering up to 400 Gbps bandwidth for efficient inter-machine communication.
\system and all baselines are executed with NVIDIA driver version 570.172.08, CUDA 12.8.0, PyTorch 2.8.0, NCCL 2.27.3, and NVSHMEM 3.4.5 installed on each machine.

\noindent\textbf{Models \& Workloads}~
To prove the generalizability of \system, we use four commonly-used workloads for evaluation.
For image generation, we use Flux \cite{flux-image-gen} with 12B parameters for generating 3072$\times$3072 and 4096$\times$4096 images. 
For video generation, we use CogVideoX \cite{cogvideox} with 5B parameters for producing 20 or 40 seconds 768$\times$1360 videos.
Both models use 24 attention heads, with head dimensions of 128 and 64 for Flux and CogVideoX, respectively.

\subsection{Overall Performance}
\label{sec:overall-performance-exp}

\begin{figure*}[t]
    \centering
    \includegraphics[width=\linewidth]{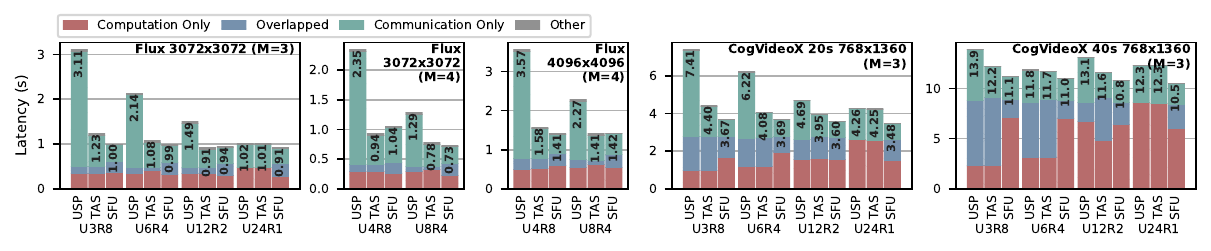}
    \caption{Overall end-to-end performance on various distributed configurations with 4 or 3 GPU machines.}
    \label{fig:overall-performance-dist}
\end{figure*}

\textbf{Optimal Distributed Configuration}~ Figure \ref{fig:overall-performance} shows the overall inference latency and memory consumption of one sampling step between \system and prior state-of-the-art works, with their own optimal distributed configurations.
Note that for different workloads, we benchmark different sets of the GPU machine numbers.
This is because SP techniques require the sequence length and number of heads to be divisible by the number of GPU machines.
Moreover, we only show USP performance numbers for $M = 1$ cases since all methods will degrade to Ulysses Attention when using a single GPU machine and \system's optimizations are tailored to inter-machine communication.
We draw four conclusions from the figure. 
First, we demonstrate that \tasShort performs worse than USP when using 2 GPU machines.
This is because when using 2 GPU machines, \tasShort introduces the same inter-machine communication volume as USP but its communication latency is not overlapped with computations.
Second, we find that \tasShort achieves a speedup by \Eval{avg:method=tas} on average (up to \Eval{max:method=tas}) when comparing to USP when using more than 2 GPU machines for distributed DiT inference.
This is because \tasShort introduces a much lower inter-machine communication volume than USP with more than 2 GPU machines.
Third, we show that by overlapping all-to-all operators with computation and leveraging one-sided communication libraries, \system achieves a better speedup than \tasShort by \Eval{avg:method=sfu} on average (up to \Eval{max:method=sfu}).
This is because \system is able to overlap inter-machine communication with computation.
Fourth, we find that \system does not introduce higher memory consumption compared to prior state-of-the-art work USP \cite{usp}, thanks to its careful design and implementation with one-sided communication library in Algorithm \ref{alg:1-side-comm}, where we use at most one copy buffer of $Q$, $K$, $V$, and $O$ tensors.
Overall, we conclude that \system has lower inter-machine communication overheads compared to prior state-of-the-art frameworks while introducing negligible memory overheads, and that all proposed key ideas are required to achieve the optimal system performance.

\noindent\textbf{Other Distributed Configurations}~ Figure \ref{fig:overall-performance-dist} compares the system performance of \system with baselines at different distributed configurations, where U$x$R$y$ stands for using Ulysses Attention with parallelization degree $P_u = x$ and Ring Attention with parallelization degree $P_r = y$.
We make three observations from the figure.
First, we found that both our work \tasShort and \systemShort can consistently outperform USP on all setups by \Eval{avg:diffdist_method=tas} speedup on average (up to \Eval{max:diffdist_method=tas}) and \Eval{avg:diffdist_method=sfu} speedup on average (up to \Eval{max:diffdist_method=sfu}), respectively.
This is because \tasShort and \systemShort conduct Ring Attention within a single GPU machine and hence significantly reduce inter-machine communication volume.
Second, we found that for all methods, increasing the Ulysses Attention degree $U$ can always achieve better speedups, except for \tasShort where U12R2 is better than U24R1.
This is because increasing Ulysses Attention degree $U$ reduces the Ring Attention degree $R$.
Since Ring Attention divides one attention operator along the sequence length into multiple ones, choosing a small $R$ prevents the fragmentation of attention operations on each GPU, which reduces kernel launching overheads, avoids bubbles in GPU warp schedulers, and thus better utilizes the GPU computational resources.
However, for \tasShort since the all-to-all operators are not overlapped with communication, excessively using Ulysses Attention could result in larger non-overlapping communication overheads.

\subsection{Layerwise Performance}

\begin{figure}[t]
    \centering
    \includegraphics[width=.887\linewidth]{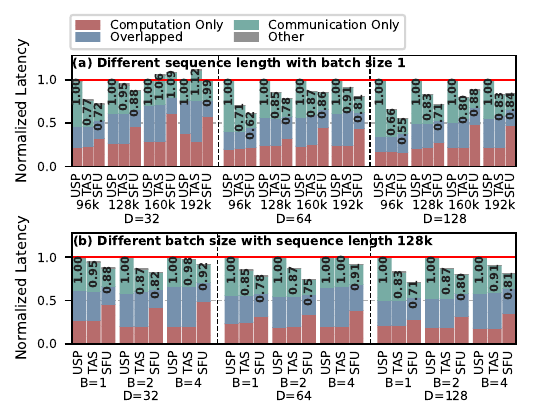}
    \vspace{-10pt}
    \caption{Normalized latency of \system at different head dimensions, when varying sequence length and batch sizes.}
    \label{fig:bs_ql}
\end{figure}

We conduct micro-benchmarks by varying the sequence length, head dimension, and batch size, and measuring the latency of a single attention layer to understand how \system's performance generalizes to different workloads.

\noindent\textbf{Different Head Dimensions}~
Figure \ref{fig:bs_ql}a and Figure \ref{fig:bs_ql}b demonstrate the performance of \system compared against USP with different head dimensions ($D$).
We observe that for all head dimensions, \system always achieves a better performance compared to USP, by \Eval{avg:D=32_method=sfu}, \Eval{avg:D=64_method=sfu}, and \Eval{avg:D=128_method=sfu} speedup on average for $D = 32, 64, 128$ respectively.
Besides, we notice that as the head dimension increases, \system provides better speedups.
This is because sharding attention computation makes the kernel launching grids fairly small, which underutilizes GPU computation resources.
By increasing the head dimension, we make the attention computation easier to saturate the GPU computational resources.

\noindent\textbf{Different Sequence Length}~
Figure \ref{fig:bs_ql}a demonstrates the normalized latency of \system compared to USP with different sequence lengths. 
We draw two observations from the figure.
First, we find that \system almost consistently provides speedup compared to USP, demonstrating \Eval{avg:L=96k_method=sfu}, \Eval{avg:L=128k_method=sfu}, \Eval{avg:L=160k_method=sfu}, and \Eval{avg:L=192k_method=sfu} speedup on average with sequence length 96k, 128k, 160k, and 192k, respectively.
The only exception case is when we have more than 160k tokens when the number of head dimensions $D = 32$.
Second, we notice that as the sequence length increases, \system achieves less speedup compared to USP.
This is because as the sequence length increases, the computation on each GPU increases quadratically while the communication on each GPU only increases linearly, which thus makes the attention layer execution compute-bounded and offsets the benefits of communication overhead reduction techniques introduced by \system.

\noindent\textbf{Different Batch Size}~
Figure \ref{fig:bs_ql}b shows the normalized latency of \system compared to USP with different batch sizes.
We observe that \system consistently outperforms USP on all batch sizes, demonstrating \Eval{avg:B=1_method=sfu}, \Eval{avg:B=2_method=sfu}, \Eval{avg:B=4_method=sfu} speedup on average.
However, we notice that as the batch size increases, there is no clear trend on the speedup of \system compared to USP across various head dimensions.

Overall, we conclude that \system can provide consistent speedups over prior state-of-the-art work across different layerwise configurations, which demonstrates the generalizability of \system to various workloads.

\section{Related Works}
\label{sec:related-works}

\textbf{Distributed Inference of DiTs}~
DiT inference is both compute- and memory-intensive, and distributed execution is often needed to avoid out-of-memory errors, reduce latency, and improve hardware utilization.
Ulysses Attention \cite{ulysses}, Ring Attention \cite{ring-attn}, and their combination USP \cite{usp} are the main SP techniques used today.
DistriFusion \cite{distrifusion} and PipeFusion \cite{pipefusion} explore lossy communication-hiding techniques that leverage temporal redundancy across sampling steps to hide communication latency with little accuracy loss.
However, these approaches are limited by their focus on specific network topologies or their lack of addressing network constraints, resulting in significant communication overhead..

\noindent\textbf{Communication-Computation Overlapping}~
We classify prior works for overlapping communication and computation into manual and compiler-based approaches.
For manual approaches, Flux \cite{flux-comm-overlap} hides the latency of \texttt{AllGather} and \texttt{ReduceScatter} operators with computation in tensor parallelism for LLMs.
DeepEP \cite{deepseek-v3,deepep} uses hand-optimized kernels that start communication from inside CUDA kernels for fine-grained overlap. 
Comet \cite{comet} hides all-to-all latency but focuses on GEMM operators and does not directly support attention. 
ScaleFusion \cite{scalefusion} breaks all-to-all operators to overlap with computation but targets diffusion transformers with spatial-temporal attention architecture.
However, these works cannot be directly applied to DiT inference with general attention architecture, which has its unique challenges as discussed in Section \ref{sec:motivation}.
For compiler-based approaches, frameworks like TileLink \cite{tilelink}, Triton-Distributed \cite{triton-distributed}, Mercury \cite{mercury} integrate communication primitives into compilers (e.g., Triton) to automatically overlap communication and computation.
While these works provide general abstractions for communication-computation overlapping, directly applying them to discover the specialized overlapping techniques such as \torus is non-trivial since it requires careful scheduling of communication and computation operators to ensure fine-grained overlapping.
We believe \system can be integrated into these compiler-based works to further enhance the performance of DiT inference.

\section{Conclusions}
\label{sec:conclusion}
In this paper, we present \system, a novel distributed DiT inference engine.
To address the key challenges in existing SP techniques, \system introduces three key techniques: (1) topology-aware sequence parallelism that combines Ulysses Attention and Ring Attention in a topology-aware manner; (2) \torus that pipelines inter-machine communication with computation to hide communication overheads; and (3) one-sided communication implementation to reduce synchronization overheads.
Our extensive experiments demonstrate that \system outperforms existing SP techniques with \Eval{avg:method=sfu} speedup on average (up to \Eval{max:method=sfu}).
We hope \system provides new insights for efficient and scalable distributed DiT inference on modern GPU machines.

\begin{acks}
This paper is supported by Vector Institute Research grants, the Canada Foundation for Innovation JELF grant, NSERC Discovery grant, AWS Machine Learning Research Award, Facebook Faculty Research Award, Google Scholar Research Award, and VMware Early Career Faculty Grant.
\end{acks}

\bibliography{main}
\bibliographystyle{plain}
\balance

\clearpage
\newpage

\appendix
\section{One-Sided Communication Implementation Details}

In \system, we use four communication functions in our implementation (Algorithm \ref{alg:1-side-comm}).
\textsc{ScatterPush} and \textsc{GatherPull} are used to implement the all-to-all operators in Ulysses Attention, which rely on \texttt{nvshmemx\_putmem\_on\_stream} and \texttt{\seqsplit{nvshmemx\_getmem\_on\_stream}} APIs in NVSHMEM, respectively.
\textsc{ScatterPush} pushes data from multiple source buffers to the destination buffer on different remote GPUs specified by the $ranks$ list, while \textsc{GatherPull} pulls data from multiple destination buffers multiple destination buffers.
\textsc{Barrier} and \textsc{BarrierAll} are used to synchronize data consistency across multiple GPUs, which relies on \texttt{\seqsplit{nvshmemx\_barrier\_on\_stream}} and \texttt{nvshmem\_barrier\_all\_on\_stream} APIs respectively.
\textsc{Barrier} synchronizes GPUs in a specified process group, while \textsc{BarrierAll} synchronizes all GPUs.

Note that for all communication functions, we pass in a CUDA stream parameter to ensure the communication operations are executed in the correct order with computation operations.
In \system's implementation, we create two CUDA streams aside from the default stream (used for computation), one for the Ring Attention communication, denoted by $stream_\text{ring}$ and one for the other communications, denoted by $stream_\text{other}$.
Specifically, we execute all communication operations in the $\textsc{RingAttn}$ (Line \ref{line:ringAttnBegin}-\ref{line:ringAttnEnd}) function on the $stream_\text{ring}$ and the other communication operations in $stream_\text{other}$ (Line \ref{line:commOuter1}, \ref{line:commOuter2}, \ref{line:commOuter3}, \ref{line:commOuter4}, and \ref{line:commOuter5}).
In this way, both intra- and inter-machine communication latencies are hidden by computation operators with the one-sided communication library in \system.

\section{Ablation Studies}
\begin{figure*}[t]
    \centering
    \includegraphics[width=\linewidth]{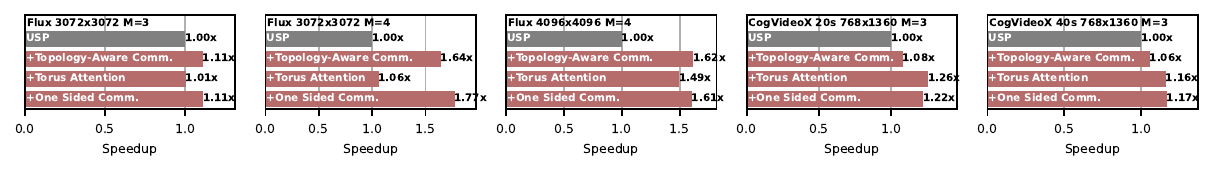}
    \caption{Ablation studies in \system.}
    \label{fig:ablation}
\end{figure*}

Figure \ref{fig:ablation} shows the end-to-end normalized latency of one sampling step of all workloads used in Section \ref{sec:evaluation} when gradually adding each proposed method.
We observe that the topology-aware communication scheduling (\tasShort) consistently outperforms USP \cite{usp} by \Eval{avg:method=tas} speedup on average (up to \Eval{max:method=tas}).
However, for the \torus and one-sided communication implementation, their benefits are workload-dependent.
For image generation workloads with Flux models, \torus implemented with NCCL does not provide speedups compared to \tasShort.
This is because image generation workloads have relatively short sequence lengths and thus the communication volume of Ulysses Attention is not large enough to become the performance bottleneck.
However, when implemented with one-sided communication libraries, \torus still provides significant speedups compared to \tasShort, since the one-sided communication implementation can better overlap communication with computation.
For video generation workloads, \torus itself can already provide significant speedups compared to \tasShort, since video generation workloads have long sequence lengths and thus the communication volume of Ulysses Attention becomes the performance bottleneck.
However, the one-sided communication implementation provides marginal speedups for video generation workloads, since the communication overheads are already largely hidden by \torus.

Overall, we conclude that all three proposed methods in \system are required to achieve the optimal system performance.

\section{Merging Attention Outputs with Different $KV$ Tensors}

In both Ring Attention and \torus, the $KV$ tensors are distributed among multiple GPUs. 
Thus, during attention computation, each GPU will compute the attention outputs using different $KV$ tensors.
To correctly compute the final attention output, we need to merge the attention outputs computed with different $KV$ tensors, similar to FlashAttention \cite{flash-attn,flash-attn-2,flash-attn-3}.
In this section, we describe how to merge the attention outputs computed with different $KV$ tensors.

\textbf{Algorithm for Merging Attention Outputs}~
Formally, suppose that we have split the $KV$ tensors along the sequence length into $N$ partitions, and denote the $i$-th partition by $K_i, V_i$, the final attention output of a query tensor $Q$ on each $KV$ partition can be computed as:
\begin{equation}\label{eq:attn}
    \begin{aligned}[c]
        &m_i = \textrm{rowmax}(QK_i^\intercal)\\
        &l_i = \textrm{rowsum}(e^{QK_i^\intercal - m_i})\\
        &O_i = (e^{QK_i^\intercal - m_i}V_i)/l_i
    \end{aligned}
\end{equation}
Thus, we can represent the computation on each GPU $i$ as a triplet of $(O_i, l_i, m_i)$, denoted by an intermediate result matrix $A_i = \begin{bsmallmatrix}
        O_i \\
        l_i \\
        m_i
    \end{bsmallmatrix}$.
To combine the attention outputs computed on two different $KV$ tensors, i.e., $A_i$ and $A_j$, we can merge them as $
    A_i \oplus A_j =
    \begin{bsmallmatrix}
        O \\
        l \\
        m
    \end{bsmallmatrix}
$, where,
\begin{equation}
    \begin{aligned}[c]
        &m = \max(m_i, m_j) \\
        &l = l_ie^{m_i - m} + l_je^{m_j - m} \\
        &O = (O_i l_i e^{m_i - m} + O_j l_j e^{m_j - m}) / l
    \end{aligned}
\end{equation}
By applying the above merging operation on all $N$ partitions, we can obtain the sum of all $A_i$ matrices, $A_1 \oplus \cdots \oplus A_N$, and then extract the first tensor $O$ from the resulting matrix.

\textbf{Optimizing Floating-Point Operations}~
To avoid expensive floating-point divisions when merging the attention outputs, we can rewrite each intermediate result matrix $A_i$ as $A_i'$, similar to FlashAttention-2 \cite{flash-attn-2}, we have,
$
    A_i' = \begin{bsmallmatrix}
        O'_i \\
        l_i \\
        m_i
    \end{bsmallmatrix}
$,
where $O'_i = O_il_i$.
By doing so, we can rewrite the merging operation as,
$
    A_i' \oplus A_j' =
    \begin{bsmallmatrix}
        O' \\
        l \\
        m
    \end{bsmallmatrix}
$,
where,
\begin{equation}
    O' = O'_i e^{m_i - m} + O'_j e^{m_j - m}
\end{equation}
After merging all intermediate result matrices, we can finally compute the final output tensor $O$ as $O = O'/l$.
In this way, we only need to perform one floating-point division at the end of the merging process, thus avoiding unnecessary floating-point operations.
\begin{algorithm}[t]\scriptsize
    \caption{\scriptsize Attention kernel with multiple $Q$ and $KV$ tensors on Ampere GPUs}
    \label{alg:merge-kv}
    \begin{algorithmic}[1]
        \Require 
            \item[] Batch size $B$, Number of heads $H$, Head dimension $D$, Number of $Q$ and $KV$ tensors $nQO$, $nKV$
            \item[] $nQO$ tensors $\{Q_i\}_{i=1}^{nQO}$ of shape $[B, lQO_i, H, D]$
            \item[] $nKV$ tensors $\{K_i\}_{i=1}^{nKV}$ of shape $[B, lKV_i, H, D]$
            \item[] $nKV$ tensors $\{V_i\}_{i=1}^{nKV}$ of shape $[B, lKV_i, H, D]$
            \item[] $nQO$ tensors $\{O_i\}_{i=1}^{nQO}$ of shape $[B, lQO_i, H, D]$
            \item[] $nQO$ tensors $\{l_i\}_{i=1}^{nQO}$ of shape $[B, H, lQO_i]$
            \item[] $nQO$ tensors $\{m_i\}_{i=1}^{nQO}$ of shape $[B, H, lQO_i]$
            \item[] Tile size $tQO$ along the sequence length dim. of $Q$ tensors
            \item[] Tile size $tKV$ along the sequence length dim. of $KV$ tensors      
            \item[] Whether to finalize the output tensors \texttt{finalize}
        \State $cQO_i \gets \sum_{j = 0}^{i-1} \lceil \frac{lQO_j}{tQO} \rceil$ (On CPU)\label{line:computeCQO}
        \State Launch CUDA kernel with grid size $[\sum_i \lceil \frac{lQO_i}{tQO} \rceil, B, H]$\label{line:launch}
        \For{each CUDA thread block}
            \State Declare shared memory (SMem) $sQ$, $sK$, $sV$
            \State Declare register memory (RMem) $rO$, $rM$, $rL$
            \State $b, h \gets \texttt{blockIdx.y}, \texttt{blockIdx.z}$
            \State Find $i$, s.t. $cQO_{i} \leq \texttt{blockIdx.x} < cQO_{i + 1}$\label{line:findq}
            \State $q \gets tQO \cdot (\texttt{blockIdx.x} - cQO_{i})$\label{line:computeTileq}
            \State $\textsc{GMemToSMem}(Q_i[b, q:q+tQO, h, :] \rightarrow sQ)$\label{line:loadQ}
            \State $\textsc{GMemToRMem}(O_i[b, q:q+tQO, h, :] \rightarrow rO)$
            \State $\textsc{GMemToRMem}(m_i[b, h, q:q+tQO] \rightarrow rM)$\label{line:loadMLBegin}
            \If {{\color{blue}$\texttt{threadIdx.x} \% 4 = 0$}}\label{line:conditionLLoad}
                \State $\textsc{GMemToRMem}(l_i[b, h, q:q+tQO] \rightarrow rL)$
            \Else
                \State $rL \gets 0$
            \EndIf\label{line:loadMLEnd}
            \For{$j \gets 1, 2, \dots, nKV$}
                \For {each tile $k$ on sequence length dim. of $K_j$ and $V_j$}
                    \State $\textsc{GMemToSMem}({K_j}[b,k:k+tKV,h,:] \rightarrow sK)$\label{line:loadK}
                    \State $\textsc{GMemToSMem}({V_j}[b,k:k+tKV,h,:] \rightarrow sV)$\label{line:loadV}
                    \State $rS \gets \textsc{MatMul}_\text{m16n8k16}(sQ, sK)$\label{line:computeBegin}\label{line:qxkt}
                    \State $rM'\gets\textsc{WarpReduceMax}\texttt{<4>}(\max(rM, \mathrm{rowmax}(rS)))$\label{line:warpreducemax}
                    \State $rP \gets e^{\frac{rS - rM'}{\sqrt{D}}}$
                    \State $rL \gets rL \cdot e^{\frac{rM - rM'}{\sqrt{D}}} + \mathrm{rowsum}(rP)$
                    \State $rO \gets rO \cdot e^{\frac{rM - rM'}{\sqrt{D}}}+ \textsc{MatMul}_\text{m16n8k16}(rP, sV)$\label{line:computeEnd}\label{line:pxvt}
                    \State $rM \gets rM'$
                \EndFor
            \EndFor
            \State $rL \gets \textsc{WarpReduceSum}\texttt{<4>}(rL)$\label{line:warpreducesum}
            \If {finalize}
                \State $rO \gets rO / rL$\label{line:writebackO}
            \Else
                \State $\textsc{RMemToGMem}(rL \rightarrow l_i[b, h, q:q+tQO])$
                \State $\textsc{RMemToGMem}(rM \rightarrow m_i[b, h, q:q+tQO])$
            \EndIf
            \State $\textsc{RMemToGMem}(rO \rightarrow O_i[b, q:q+tQO, h, :])$
        \EndFor
    \end{algorithmic}
\end{algorithm}

\textbf{Customized CUDA Kernel Implementation}~
\torus requires to compute attention of multiple $Q$ and multiple $KV$ tensors, since each received $QKV$ tensors can be discontinuous in memory.
However, directly launching separate CUDA kernels for each attention computation and the merging operation causes high kernel launching overheads and expensive global memory accesses.
To avoid this, we fuse the attention computation for multiple $QKV$ tensors and the merging operation in a single kernel.

Algorithm \ref{alg:merge-kv} shows the pseudocode of our custom CUDA kernel implementation on NVIDIA Ampere architecture, which we implement with the CUTLASS CuTe library \cite{cutlass} similar to FlashAttention-2 \cite{flash-attn-2}, and our design can be easily extended to other GPU architectures.
Specifically, suppose we have $nQO$ $Q$ tensors and $nKV$ $KV$ tensors.
The shape of $i$-th $Q$ tensor and the $j$-th $KV$ tensors are $[B, lQO_i, H, D]$ and $[B, lKV_j, H, D]$ respectively.
We can launch a CUDA kernel with a 3-D grid whose shape is $[\sum_i \lceil \frac{lQO_i}{tQO} \rceil, B, H]$ (Line \ref{line:launch}), where $tQO$ is the sequence length tile size of the $Q$ tensors, i.e., the partition size of the sequence dimension of the $Q$ tensors that is handled in each CUDA thread block.
Similar to FlashAttention, each CUDA thread block will first load a tile of $Q$ tensor and a tile of $KV$ tensor into the shared memory (Line \ref{line:loadQ}, \ref{line:loadK}, and \ref{line:loadV}), and then compute the attention outputs on the loaded tiles (Line \ref{line:computeBegin}-\ref{line:computeEnd}).
However, since we have multiple $Q$ tensors, we need to first locate which $Q$ tensor the current thread block is handling based on its block index in the first dimension.
To achieve this, we first compute a auxiliary array $\{cQO_i\}_{i=0}^{nQO + 1}$ that is the exclusive prefix sum of the number of tiles for each $Q$ tensor, i.e., $cQO_i = \sum_{j = 0}^{i - 1} \lceil \frac{lQO_j}{tQO} \rceil$ (Line \ref{line:computeCQO}).
Then inside each CUDA thread block we binary search the x-axis of the current block index, i.e., \texttt{blockIdx.x}, to find the corresponding $Q$ tensor index $i$ such that $cQO_{i} \leq \texttt{blockIdx.x} < cQO_{i + 1}$ (Line \ref{line:findq}).
We can then find the tile index of the $i$-th $Q$ tensor as $\texttt{blockIdx.x} - cQO_i$ and compute attention outputs similarly as FlashAttention (Line \ref{line:computeTileq}).

For attention computation, we use the PTX instruction \texttt{\seqsplit{mma.sync.aligned.m16n8k16}} to perform the matrix multiplications $Q \times K^\intercal$ (Line \ref{line:qxkt}) and $P \times V^\intercal$ (Line \ref{line:pxvt}).
Figure \ref{fig:mma} (left) shows how the thread block is composed, where we use 8 warps in each thread block.
Each warp is responsible for computing one row of the result matrix by iterating through all columns of the result matrix.
For each iteration, the layout of which is shown in Figure \ref{fig:mma} (right), where we use two of the same PTX instructions. 
This is because we use the PTX instruction for vectorized memory load, \texttt{ldmatrix.sync.aligned.m8n8.x4}, for loading from shared memory, which maximizes the memory bandwidth utilization on NVIDIA Ampere GPUs, i.e., each thread reading a 128-bit data.

To handle multiple $KV$ tensors and fuse the merging operation, instead of initializing the auxiliary output tensor $O'$, the running maximum tensor $m$, and the running sum tensor $l$, to their default values (i.e., zeros, negative infinity, and zeros, respectively) at the beginning of the CUDA kernel, we load the $m$ and $l$ from global memory that are computed from the previous $KV$ tensors (Line \ref{line:loadMLBegin}-\ref{line:loadMLEnd}).
Since we need to compute the rowsum and rowmax as indicated by Equation \ref{eq:attn} of the $Q\times K^\intercal$ matrix, according to the result matrix layout mentioned above, we need to first conduct local rowsum and rowmax and then perform warp-level reductions across every 4 threads using warp-level shuffle primitives (Line \ref{line:warpreducemax} and \ref{line:warpreducesum}).
However, this could lead to incorrect $rL$ values if each thread loads the same $rL$ values from the global memory, since if we do not have any available $KV$ tensors to process, the same $rL$ values loaded from the global memory will still be summed up across every 4 threads at the end of the kernel and its values will be 4 times the original values.
To avoid this, we only have one out of every four threads to load the $rL$ values from the global memory, and let other threads initialize their local $rL$ values to zeros (Line \ref{line:conditionLLoad}) and our implementation simply chooses the thread with $\texttt{threadIdx.x} \% 4 = 0$.
Note that loading the running maximum tensor into $rM$ does not have this issue since the maximum operation is idempotent.
After processing the last $KV$ tensors (e.g., the last step in Ring Attention), we can then finalize the output tensor (i.e., when \texttt{finalize} is true) by dividing $O'$ by the $l$ tensor (Line \ref{line:writebackO}).

\begin{figure}[t]
    \centering
    \includegraphics[width=\linewidth]{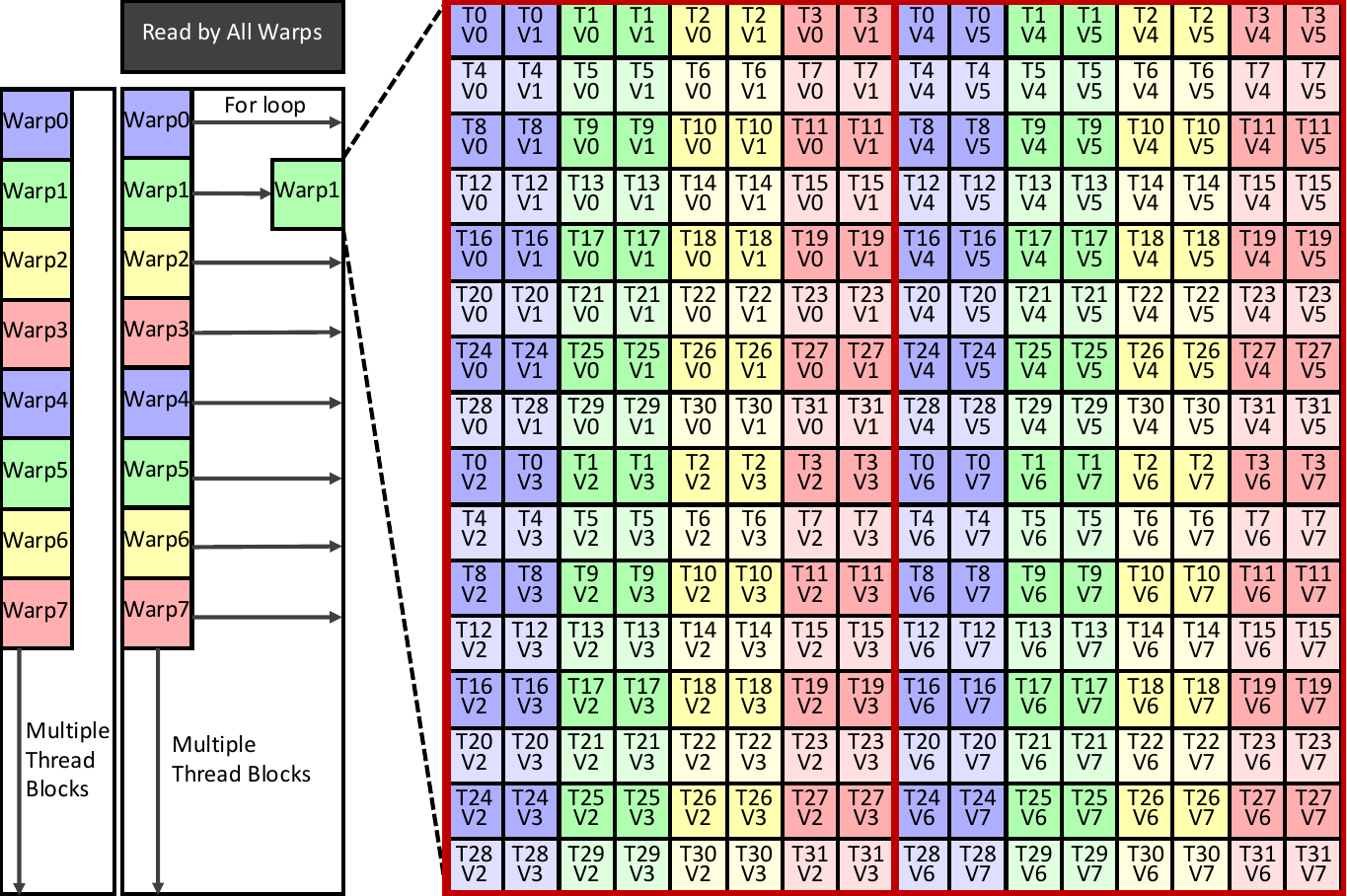}
    \caption{The block-level and thread-level layout of the matrix multiplication used for $Q \times K^\intercal$ and $P \times V^\intercal$ with the PTX instruction \texttt{mma.sync.aligned.m16n8k16} (shown in red box).}
    \label{fig:mma}
\end{figure}

\begin{figure}[t]
    \centering
    \includegraphics[width=\linewidth]{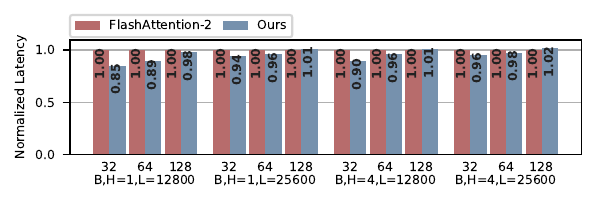}
    \caption{Normalized latency of our customized CUDA kernel and FlashAttention-2 kernels \cite{flash-attn-2} on A100 GPUs.}
    \label{fig:kernel}
\end{figure}

Figure \ref{fig:kernel} shows the performance comparison of our customized CUDA kernel against FlashAttention-2 \cite{flash-attn-2} with a single $QKV$ tensor, where we launch the customized CUDA kernel with 256 threads for each CUDA thread block and use the same tile size as FlashAttention-2.
We conclude that our customized CUDA kernel introduces negligible performance overheads compared to FlashAttention-2 even with the ability to process multiple $Q$ and $KV$ tensors and the merging operation.
\section{Inter-Machine Communication Volume Analysis}\label{app:complexity-analysis}

Suppose we have $N$ GPU machines, each with $M$ GPUs, and the parallelization degree of Ulysses and Ring Attention are $P_u$ and $P_r$, respectively.
We denote the batch size as $B$, the sequence length across all GPUs as $L$, the number of heads as $H$, and the head dimension as $D$.
Thus, each \emph{GPU machine} has a sequence length of $L / N$.

\noindent\textbf{USP.}~When $P_r \geq N$, USP performs inter-machine communication using Ring Attention.
Thus, the inter-machine communication volume (i.e., the number of elements) per GPU is:
\begin{equation}
    \text{V}_{\text{USP}} = 2 \cdot (N - 1) \cdot \frac{BLHD}{N}
\end{equation}
When $P_r \leq N$, USP performs inter-machine communication in both Ring and Ulysses Attention, and the Ulysses Attention uses $N / P_r$ as its parallelization degree for inter-machine communication.
Thus, in this case, USP incurs an inter-machine communication volume of:
\begin{equation}
    \begin{split}
    \text{V}_{\text{USP}} &= \left(2 \cdot(P_r - 1) \cdot \frac{N}{P_r}+ 4 \cdot \frac{N / P_r - 1}{N / P_r} \right) \frac{BLHD}{N}\\
    &= \left(2N + 4 - \left(2\cdot \frac{N}{P_r} + 4 \cdot \frac{1}{N / P_r}\right)\right)\frac{BLHD}{N}\\
    &\leq \left(2N + 4 - 6\right)\frac{BLHD}{N}\quad\text{(by setting $P_r = N$)}\\
    &= \left(2N - 2\right)\frac{BLHD}{N}
    \end{split}
\end{equation}

\noindent\textbf{\system.}~We can analyze the inter-machine communication volume of \system in a similar way.
When $P_u \geq N$, \system performs inter-machine communication using Ulysses Attention, and the inter-machine communication volume per GPU is:
\begin{equation}
    \text{V}_\text{\systemShort} = 4 \cdot \frac{N - 1}{N} \cdot \frac{BLHD}{N}
\end{equation}
When $P_u \leq N$, Ring Attention in \system also causes inter-machine communication, and its parallelization degree is $N / P_u$.
Thus, in this case \system incurs an inter-machine communication volume of:
\begin{equation}
    \begin{split}
    \text{V}_\text{\systemShort} &= \left(2 \cdot {(N / P_u - 1)} + 4 \cdot \frac{P_u - 1}{P_u} \cdot \frac{N}{P_u} \right) \frac{BLHD}{N}\\
    &= \left(\left(6 - 4\cdot\frac{1}{P_u}\right)\frac{N}{P_u} - 2\right)\frac{BLHD}{N}\\
    &\geq 4\cdot \frac{N - 1}{N}\cdot \frac{BLHD}{N}\quad\text{(by setting $P_u = N$)}\\
    \end{split}
\end{equation}

Thus, to show that \system can always introduce less inter-machine communication than USP, we only need to show that this is the case when $P_u \leq N$ and $P_r \leq N$.
In this case, we have $P_r = \frac{NM}{P_u} \leq N$, which implies that $P_u \geq M$.
The following lemma proves that \system always incurs less inter-machine communication volume than USP.

\begin{lemma}
Let $\text{V}_\text{diff} = \frac{\text{V}_\text{USP} - \text{V}_\text{\systemShort}}{BLHD/N} = \frac{4N}{P_u^2} - \frac{4M+ 6N}{P_u} - \frac{2P_u}{M} + 2N + 6$,
then $\text{V}_\text{diff} \geq 0$ when $2 \leq M \leq P_u \leq N$.
\end{lemma}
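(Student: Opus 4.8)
The plan is to treat $V_\text{diff}$ as a function of the three real parameters $M$, $P_u$, $N$ subject only to $2 \le M \le P_u \le N$, and to collapse it to a one-parameter inequality by exploiting monotonicity in $N$. First I would collect the terms of $V_\text{diff}$ that depend on $N$: the coefficient of $N$ is $\frac{4}{P_u^2} - \frac{6}{P_u} + 2 = \frac{2(P_u-1)(P_u-2)}{P_u^2}$, which is nonnegative because $P_u \ge M \ge 2$. Hence $V_\text{diff}$ is non-decreasing in $N$ for fixed $M, P_u$, so over the admissible range $N \ge P_u$ it attains its minimum at $N = P_u$. It therefore suffices to show $V_\text{diff}\big|_{N = P_u} \ge 0$.

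Next I would substitute $N = P_u$ and simplify. Using $\frac{2(P_u-1)(P_u-2)}{P_u} = 2P_u - 6 + \frac{4}{P_u}$, the constant terms cancel and one obtains
\begin{equation}
V_\text{diff}\big|_{N = P_u} = 2P_u + \frac{4(1-M)}{P_u} - \frac{2P_u}{M} = 2(M-1)\cdot\frac{P_u^2 - 2M}{P_u M}.
\end{equation}
Since $M \ge 2$ gives $M - 1 \ge 1 > 0$ and $P_u, M > 0$, the sign of the right-hand side is that of $P_u^2 - 2M$. Because $P_u \ge M$ and $M \ge 2$, we have $P_u^2 \ge M^2 = M \cdot M \ge 2M$, so $P_u^2 - 2M \ge 0$ and the claim follows.

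The only genuine work is the algebraic simplification that collapses $V_\text{diff}\big|_{N=P_u}$ into the factored form $2(M-1)(P_u^2 - 2M)/(P_u M)$; once that identity is verified the inequality is immediate from $P_u \ge M \ge 2$. I would also sanity-check the boundary cases $P_u = 2$, $M = P_u$, and $N = P_u$ to confirm every inequality invoked is non-strict and remains valid there, and note that although $M, P_u, N$ are integers in the application (from divisibility constraints), the argument uses only the real ordering $2 \le M \le P_u \le N$, so no integrality is needed.
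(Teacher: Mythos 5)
Your proof is correct, and it takes a genuinely different route from the paper's. The paper fixes $N$ and $M$ and studies $f(p)=\text{V}_\text{diff}$ as a function of $p=P_u$ on $[M,N]$: it computes $f''\le 0$, concludes the minimum is attained at an endpoint, and then verifies $f(M)\ge 0$ and $f(N)\ge 0$ separately. You instead observe that $\text{V}_\text{diff}$ is \emph{affine} in $N$ with coefficient $\frac{2(P_u-1)(P_u-2)}{P_u^2}\ge 0$, reduce to the single boundary case $N=P_u$, and factor
\begin{equation*}
\text{V}_\text{diff}\big|_{N=P_u}=\frac{2(M-1)\left(P_u^2-2M\right)}{P_uM}\ge 0,
\end{equation*}
which I have checked is algebraically correct and coincides with the paper's quantity $f(N)$ after setting $P_u=N$. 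Your argument buys several things: it avoids calculus entirely (no second-derivative or unimodality reasoning), it needs only one boundary evaluation instead of two, and the factored form makes the equality cases transparent (equality forces $N=P_u$ and either $P_u=2$ or $M=P_u=2$). The paper's approach is more symmetric in that it treats $P_u$ as the free variable, matching how the system actually chooses $P_u=\gcd(NM,H)$, but as a proof of the inequality yours is strictly simpler. One small point worth stating explicitly if you write this up: when $P_u=2$ the coefficient of $N$ vanishes, so $\text{V}_\text{diff}$ is constant in $N$ and still equals its value at $N=P_u$, which your factorization shows is $0$; this is consistent with the paper's remark that $P_u=2$ is the degenerate case where Ulysses and Ring have equal volume.
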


\begin{proof}
Let $f(p) = \frac{4N}{p^2} - \frac{4M+ 6N}{p} - \frac{2p}{M} + 2N + 6$. Then we have,
\begin{equation}
    f'(p) = \frac{4M}{p^2} - \frac{2}{M} + \frac{2N(3p - 4)}{p^3}
\end{equation}
\begin{equation}
\begin{split}
    f''(p) &=4\cdot\frac{6N - (2M + 3N)p}{p^4}\\
    &\leq 4\cdot\frac{6N - 2M^2 - 3MN}{p^4}\\
    &\leq \frac{-32}{p^4}\quad\text{(by setting $M=2$)}\\
    &\leq 0
\end{split}
\end{equation}
Thus, $f'(p)$ is a non-increasing function.
Since $f'(M) = \frac{2}{M} + \frac{2N(3M - 4)}{M^3} \geq 0$, $f'(p)$ has at most one $p^{*}$ such that $f'(p^*) = 0$ and $p^* \in [M, N]$.
If such $p^*$ exists, then $f(p)$ will be increasing on the interval $[M, p^*]$ and be decreasing on the interval $[p^*, N]$.
Thus, the minimum of $f(p)$ would be either $f(M)$ or $f(N)$.
We have,
\begin{equation}
    f(M) = \frac{2N(M - 1)(M - 2)}{M^2} \geq 0
\end{equation}
\begin{equation}
    \begin{split}
        f(N) &= 2N + \frac{4}{N}- \left(\frac{2N}{M} + \frac{4M}{N}\right)\\
        &\geq N - \frac{4}{N}\quad\text{(by setting $M = 2$)}\\
        &\geq 0\quad\text{(by setting $N = 2$)}
    \end{split}
\end{equation}
Thus, we show that $V_\text{diff} = f(P_u) \geq 0$ always holds for $2 \leq M \leq P_u \leq N$.
\end{proof}

\balance

\end{document}